\documentclass[12pt,twoside]{article} 
\usepackage{geometry}
\usepackage[all]{xy}
\usepackage{amsmath,amsthm}
\usepackage{amssymb}
\usepackage{listings}
\usepackage{xcolor}
\usepackage{color}
\usepackage{graphicx}
\usepackage{verbatim}
\usepackage{epstopdf}
\usepackage{subfigure}
\usepackage{algorithm}
\usepackage{algorithmicx}
\usepackage{algpseudocode}
\usepackage[colorlinks,
            linkcolor=blue,
            anchorcolor=blue,
            citecolor=blue]{hyperref}

\setcounter{page}{1} 

\setlength{\textheight}{21.6cm} 

\setlength{\textwidth}{14cm} 

\setlength{\oddsidemargin}{1cm} 

\setlength{\evensidemargin}{1cm} 

\pagestyle{myheadings} 

\thispagestyle{empty}

\markboth{\small{Kexin XIE and Myron HLYNKA}}{\small{running head}}

\date{} 

\begin{document}

\centerline{\Large{\bf Forward and Reverse Parking in a Parking Lot}} 

\centerline{} 

\centerline{\bf {Kexin XIE}} 

\centerline{} 

\centerline{School of Mathematics and Statistics}

\centerline{Dalian University of Technology} 

\centerline{Dalian 116024, Liaoning, China} 

\centerline{} 

\centerline{\bf {Myron HLYNKA}} 

\centerline{} 

\centerline{Department of Mathematics and Statistics} 

\centerline{University of Windsor} 

\centerline{Windsor, Ontario, Canada N9B 3P4} 

\newtheorem{Theorem}{\quad Theorem}[section] 

\newtheorem{Definition}[Theorem]{\quad Definition} 

\newtheorem{Corollary}[Theorem]{\quad Corollary} 

\newtheorem{Lemma}[Theorem]{\quad Lemma} 

\newtheorem{Example}[Theorem]{\quad Example} 

\centerline{}

\begin{abstract}
The choice of forward and reverse parking in a parking lot is studied as a stochastic process. An $M/M/c/c$ queueing system is used as an initial framework. We use Monte Carlo simulation to get the relationship between vehicle orientation and vehicle entry and exit rates, as well as the most likely parking states at each specific rate. We view the change in parking status over time. 
\end{abstract}

{\bf Mathematics Subject Classification:} 60K25, 90B22 \\

{\bf Keywords:} 
queueing system, simulation, parking lot, reverse parking, forward parking

\section{Introduction}
There is considerable mathematical research on parking lots. See \cite{CalM}, \cite{Nour}, for example.
However, there does not appear to be a mathematical study on reverse parking versus forward parking, in parking lots.  

When people examine a parking lot, they will find that there are two possible parking directions for the vehicle: forward parking and reverse parking (or backward parking). 

\begin{figure}[htb]
   \centering
   \includegraphics[scale=0.3]{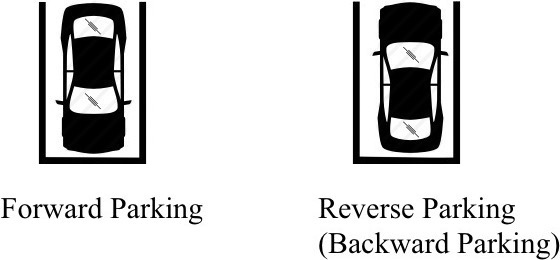}
\end{figure}

Reverse parking is considered a safer way to forward parking. From the USA FIRE PROTECTION official website  (\cite{mayK}) ``Reverse parking is about making the environment safer when the driver leaves the parking space. When reverse parking, a driver is going into a known space with no vehicle and pedestrian traffic. When leaving the parking space the driver is able to see the surroundings more clearly". The website thejournal.ie  (\cite{MelM}) suggests that people should reverse park because reversing into a parking space gives you greater control and makes it easier to maneuver out of the space. ``If you think reverse parking is difficult, it's much more difficult to try and wiggle your car out of a tight space when you have parked front first."

The City of Mount Pearl, Newfoundland adopted a Reverse Parking Policy (\cite{CMP})  It says `` all employees parking vehicles on City of Mount Pearl property and on any other parking lot must ensure to park the vehicle safely so you can exit the parking lot by pulling out head-on."

On the other hand, in many cases, people are not willing to reverse park. ``In a study conducted by The National Highway Traffic Safety Administration (NHTSA), on average $76\%$ or drivers park nose-in, the way many of us are used to parking." (\cite{mayK}) This situation makes sense to some extent. For example, in a double-row parallel parking lot, people have a high probability of not parking backward when the opposite parking space is already occupied; in a single-row parking lot, people are less willing to reverse park because it would take more time, and slow down or vehicles directly behind. .
 
In this article, we will build a mathematical model to simulate the parking directions in the parking lot and discuss the relation between parking direction and arrival time and service time. We also use Markov Chain methods to determine the effectiveness of our orientation model. This article appears to be the first mathematical model to study reverse parking in parking lots. 

Recall the density and cumulative distribution function (cdf) of an exponential distribution with parameter $\lambda$ are given by $f(t)=\lambda e^{-\lambda t}, \quad t>0$ and $F(t)=1-e^{-\lambda t}, \quad t\ge 0.$\\
Here $E(X)=\frac{1}{\lambda}.$ An important property of an exponential random variable $X$ is the \textit{$memoryless \ property$}. This property states that for all $x\ge0$ and $t\ge0$,$P(X>x+t|X>t)=P(X>x).$ This property is used in our simulation to simplify the study of the next event. Beyond the \textit{$memoryless \ property$}, we also use the well-known properties below (\cite{FellW}) , in our simulation programming.
 
\begin{Theorem}\label{T:min} Let $W$,$X$,$Y$ be three random variables and $X$,$Y$ are independent of each other. If $X\sim Exp(\lambda)$, $Y\sim Exp(\mu)$, then $$W=\min (X,Y)\sim Exp(\lambda+\mu).$$
\end{Theorem}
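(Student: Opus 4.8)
The plan is to work with the survival function (the complementary cdf) rather than the density, since the minimum of two variables interacts most cleanly with events of the form ``both exceed a threshold.'' The cornerstone is the elementary set identity that $W = \min(X,Y)$ exceeds a level $w$ precisely when \emph{both} $X$ and $Y$ exceed $w$; that is, $\{W > w\} = \{X > w\} \cap \{Y > w\}$ for every $w \ge 0$.

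First I would fix $w \ge 0$ and rewrite $P(W > w) = P(X > w,\, Y > w)$ using this identity. Next, invoking the independence of $X$ and $Y$, I would factor this joint probability as $P(W > w) = P(X > w)\,P(Y > w)$. At this point the exponential assumption enters: from the cdf $F(t) = 1 - e^{-\lambda t}$ recalled above, the survival functions are $P(X > w) = e^{-\lambda w}$ and $P(Y > w) = e^{-\mu w}$. Substituting and combining the exponents yields $P(W > w) = e^{-\lambda w} e^{-\mu w} = e^{-(\lambda + \mu) w}$.

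The final step is to recognize this as the survival function of an exponential random variable with parameter $\lambda + \mu$: equivalently, the cdf of $W$ is $F_W(w) = 1 - e^{-(\lambda+\mu)w}$ for $w \ge 0$, which is exactly the cdf of $Exp(\lambda+\mu)$. Since a distribution is uniquely determined by its cdf, this identification completes the argument.

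Honestly, I expect no serious obstacle here; the proof is short and the only points requiring care are bookkeeping rather than conceptual. The one place to stay alert is the use of independence, which is essential for factoring the joint survival probability and cannot be dropped, together with the implicit appeal to the fact that a matching cdf pins down the distribution. A route via densities (differentiating the survival function, or convolving) is available but needlessly heavier; the survival-function method is the cleanest.
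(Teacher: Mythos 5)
Your proof is correct. Note that the paper itself gives no proof of this theorem at all --- it is stated as a ``well-known property'' with a citation to Feller --- so there is no in-paper argument to compare against; your survival-function derivation, namely $P(W>w)=P(X>w)P(Y>w)=e^{-(\lambda+\mu)w}$ via the identity $\{W>w\}=\{X>w\}\cap\{Y>w\}$ and independence, is the standard textbook proof and correctly supplies what the paper leaves implicit.
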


\begin{Theorem}\label{T:small} Let $X$,$Y$ be independent and exponentially distributed random variables with respective parameters $\lambda$ and $\mu$, then $$P(X<Y)=\frac{\lambda}{\lambda+\mu}.$$\end{Theorem}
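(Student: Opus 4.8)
The plan is to compute the probability directly by integrating the joint density of $(X,Y)$ over the region where $X<Y$. Since $X$ and $Y$ are independent, their joint density factors as $f_{X,Y}(x,y)=\lambda e^{-\lambda x}\,\mu e^{-\mu y}$ for $x,y>0$. Describing the event $\{X<Y\}$ as the set $\{(x,y): 0<x<y\}$, I would set up the iterated integral, integrating $y$ over $(x,\infty)$ in the inner integral:
\begin{equation*}
P(X<Y)=\int_0^\infty \lambda e^{-\lambda x}\left(\int_x^\infty \mu e^{-\mu y}\,dy\right)dx.
\end{equation*}

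The key simplification is that the inner integral evaluates to the survival probability $P(Y>x)=e^{-\mu x}$. This mirrors the equivalent conditioning argument: conditioning on $X=x$ and using independence gives $P(X<Y\mid X=x)=P(Y>x)=e^{-\mu x}$, and averaging over the density of $X$ produces the same single integral. Substituting back yields
\begin{equation*}
P(X<Y)=\int_0^\infty \lambda e^{-\lambda x}e^{-\mu x}\,dx=\lambda\int_0^\infty e^{-(\lambda+\mu)x}\,dx,
\end{equation*}
and evaluating this elementary exponential integral gives $\lambda/(\lambda+\mu)$, as claimed.

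There is no genuine obstacle here; the argument is elementary once the region of integration is described correctly. The only point requiring care is fixing the limits of the iterated integral so that the inner integration produces the survival function of $Y$ rather than its cdf. As a consistency check, I would note that the identical computation with the roles of $X$ and $Y$ reversed gives $P(Y<X)=\mu/(\lambda+\mu)$; since $X$ and $Y$ are continuous, $P(X=Y)=0$, so the two probabilities must sum to $1$, and indeed $\lambda/(\lambda+\mu)+\mu/(\lambda+\mu)=1$, confirming the result.
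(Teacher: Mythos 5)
Your proof is correct. The paper itself gives no proof of this theorem; it simply cites it as a well-known property of exponential random variables (referencing Feller) and uses it to decide whether the next simulated event is an arrival or a departure. Your direct computation—factoring the joint density by independence, integrating $y$ over $(x,\infty)$ to obtain the survival function $e^{-\mu x}$, and then evaluating $\int_0^\infty \lambda e^{-(\lambda+\mu)x}\,dx = \lambda/(\lambda+\mu)$—is the standard textbook argument and is complete; the symmetry check that $P(X<Y)+P(Y<X)=1$ is a nice confirmation, valid because $P(X=Y)=0$ for continuous random variables.
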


\section{Mathematical Model}
In this section, we will describe the parking lot model. We assume that the parking lot is an M/M/c/c queueing system. i.e. We assume that vehicles arrive according to a Poisson process at rate $\lambda$, with independent exponentially distributed interarrivals, and with sojourn times which are exponentially distributed at rate $\mu$. Each parking space is considered as a server. When all spaces are occupied, any arriving vehicle must leave and is lost to the system.
\subsection{Setting (Assumptions)}
\begin{itemize}
  \item The system we study consists of a simplified parking lot with pairs of parking spots, shown vertically below, and access lanes on both sides. In the general case, there are $n$ parking spaces with $m$ rows of 2 cars per row so $n=2m$. We will work with a particular case with $n=10$ parking spots in $m=5$ rows. For example, matrix $A$ represents a specific parking situation in the parking lot. The larger black section for each vehicle represents the front of the car. We use the notation $1$ for parking forward, $-1$ for reverse parking, $0$ for empty.
 
 \begin{figure}[htb]
  	\centering
  	\subfigure[matrix]{
  	\begin{minipage}{3cm}
  		\centering
  		\includegraphics[width=3cm,height=3cm]{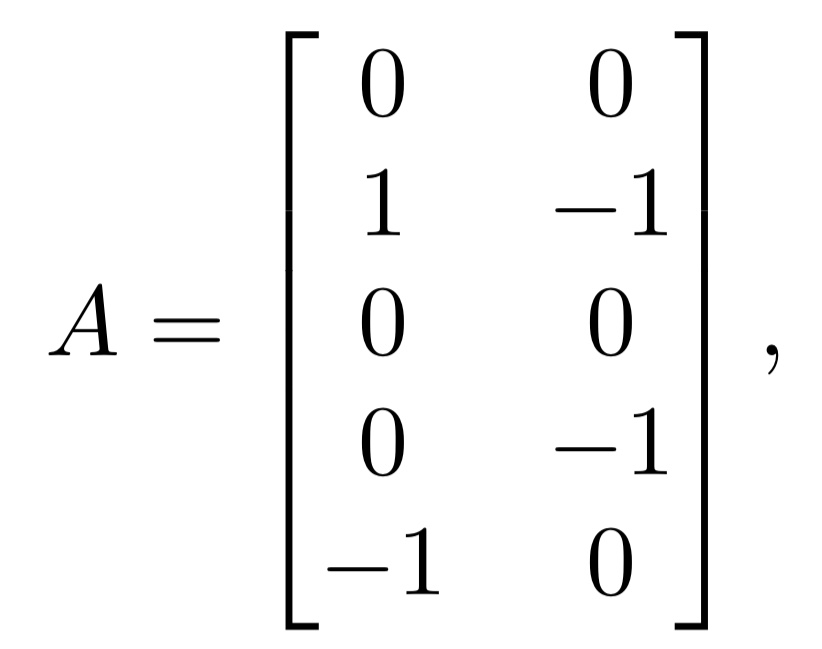}
  	\end{minipage}
  	}
  	\subfigure[parking lot]{
  	\begin{minipage}{2cm}
  		\centering
  		\includegraphics[width=2cm,height=3cm]{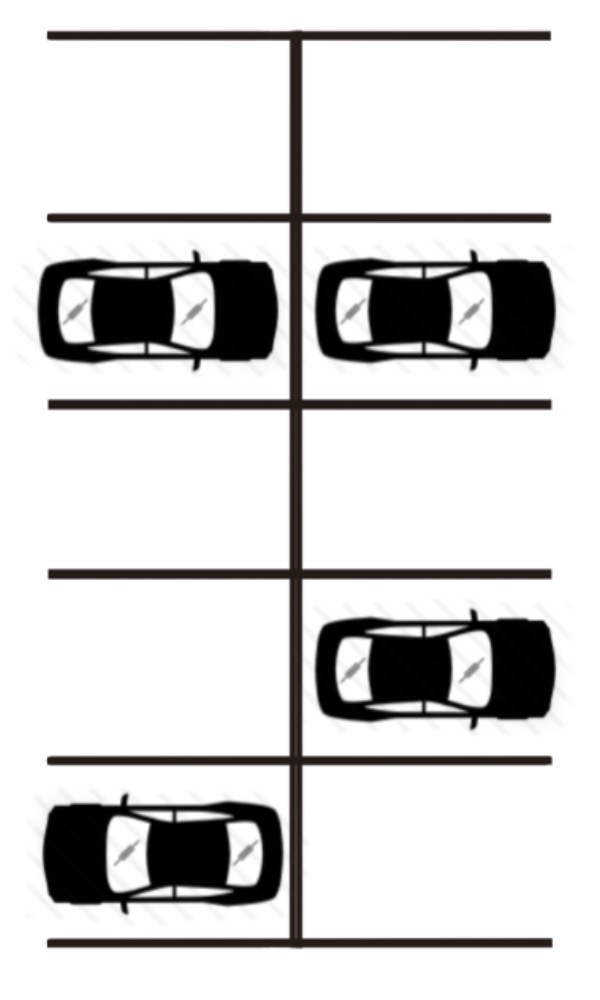}
  	\end{minipage}
  	}
 \end{figure}

  \item Vehicle arrivals occur at rate $\lambda$ according to a Poisson process. Parking times are exponentially distributed with parameter $\mu$.  This parking lot can be viewed as a multi-server queueing system, with vehicles being the customers and the parking spots acting as servers. 
    \item The system runs for time  $T$. In our model, we assume $T=24$ hours.
  \item If there is no free parking space, an arriving vehicle leaves the parking lot and is lost to the system.
  \item There are two ways that a vehicle can drive into a reverse parking or backward parking position. One way is for the vehicle to back into an empty parking position. Another way is the for the vehicle to select an empty pair of spots and to drive through one spot to position itself in a ``reverse'' parking position in the other spot.  
  \item There are different probabilities for different parking spaces: when a parking space is occupied, the parking probability of an arrival vehicle is $0$; when there is a parking spot with no car in the opposite pair position, we choose the parking probability to be twice as large as when there is a car opposite. (This ratio can be treated as a parameter and adjusted as desired.) For our example, the elements in matrix $B$ represent the parking probability of each space in matrix $A$:
  $$B= \begin{bmatrix}
                           0.2 & 0.2 \\
                           0&0 \\
                           0.2 & 0.2  \\
                           0.1 & 0 \\
                           0 & 0.1
                     \end{bmatrix}$$
  \item For a specific parking space, we assume the probability of parking forward is 0.8, backward is 0.2 when the opposite pair position is occupied. We assume that the probability of parking forward is 0.1, backward is 0.9 when the opposite pair position is empty. These probabilities can be taken to be parameters of the model and can be changed according to a particular setting. 
\end{itemize}

\subsection{Markov Analysis}
An M/M/c/c queueing system has a state space $\{0,1,2,3...c\}$ where the state represents the number of vehicles in the system. The following is well known. (\cite{SzJ}, \cite{AdanR})

`` The state-space diagram for M/M/c/c is shown as below:
\begin{figure}[htb]
  \centering
  \includegraphics[scale=0.5]{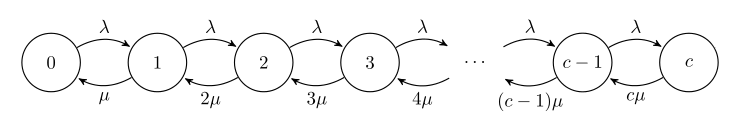}
  \caption{Process diagram for M/M/c/c Queue}
\end{figure}

The transition rate matrix is:
$$Q=\begin{bmatrix}
  -\lambda &\lambda  \\
  \mu &-(\mu+\lambda) &\lambda & \\
  &2\mu &-(2\mu+\lambda) &\lambda \\
  & &3\mu &-(3\mu+\lambda) &\lambda \\
  & & & &\ddots  \\
  & & & &(c-1)\mu &-(\lambda+(c-1)\mu) &\lambda \\
  & & & & &c\mu &-c\mu \\
\end{bmatrix}$$

The steady-state distribution exists since the process has a finite state space. The stationary distribution is 
$$P_k=P_0\left(\frac{\lambda}{\mu}\right)^{k}\frac{1}{k!}=P_0 \rho^{k}\frac{c^{k}}{k!},k=1,2,\cdots,c$$
$$P_0=\left(\sum_{k=0}^c\left(\frac{\lambda}{\mu}\right)^{k}\frac{1}{k!}\right)^{-1}=\left(\sum_{k=0}^c\frac{c^{k}}{k!}\rho^{k}\right)^{-1},$$
where $\rho=\frac{\lambda}{c\mu}$."

In our parking lot model where $c=10$, we illustrate the values of $P_k,k=0,1,2,\cdots,10$ if $\lambda=1$, $\mu=1$.  Results are as shown in Table 1, and our simulated results give close approximations. 

\begin{table}[htb]
\centering
\caption{Calculated and simulated values of $P_i$}
\label{tab:1}       
\begin{tabular}[l]{@{}lcccccc}
\hline
$P_i$ & $P_0$ & $P_1$ & $P_2$ & $P_3$ & $P_4$ & $P_5$\\
\hline
Calculated & $.368$ & $.368$ & $.184$ & $.0613$ & $.0153$ & $.00307$ \\
Simulated & $.368$ & $.367$ & $.184$ & $.0617$ & $.0153$ & $.00311$  \\
\hline
$P_i$ & $P_6$ & $P_7$ & $P_8$ & $P_9$ & $P_{10}$\\
\hline
Calculated & $.000511$ & $.00007$ & $.000009$ & $0$ & $0$\\ 
Simulated & $.000492$ & $.00008$ & $.000008$ & $0$ & $0$\\
\hline
\end{tabular}
\end{table}

At this point, the analysis does not consider the orientation of the vehicle (forward or reverse), but gives a summary of results in the nonoriented model, and helps to emphasize the complexity of the oriented model, and partially confirm the validity of the oriented model.

\subsection{Oriented Model}

We have six different types of oriented paris for each row of the parking lot matrix: Single Forward, Single Backward, One Forward and One Backward, Double Forward, Double Backward, or Empty, as the figure shows below (here written horizontally):

\begin{figure}[htb]
  \centering
  \includegraphics[scale=0.3]{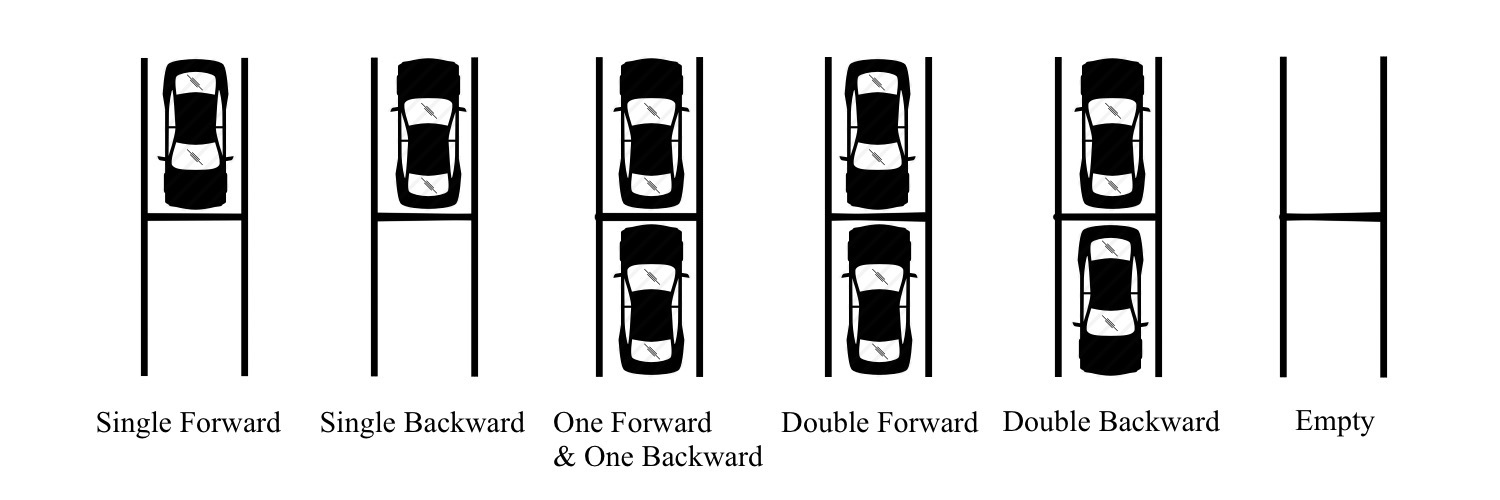}
\end{figure}

Now we want to calculate the number of possible states for our oriented parking lot model with $n$ spots where $n=2m$ with $m$ rows of 2 spots (pairs), where each state is a six tuple with the six components giving the count of the six possible pair types. We are only concerned with the counts of each orientation type and not the order in which they occur. 
\begin{Theorem}
The number of states (6 tuples) in an oriented parking lot with $n=2m$ spots consisting of $m$ pairs is $\binom{m+5}{5}$. 
\end{Theorem}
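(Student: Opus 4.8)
The plan is to reduce the count to the classical stars-and-bars identity. First I would confirm that the six orientation types listed for a pair---Empty, Single Forward, Single Backward, One Forward and One Backward, Double Forward, Double Backward---are mutually exclusive and collectively exhaustive, so that every one of the $m$ pairs falls into exactly one type. This turns a configuration of the lot into an assignment of each pair to one of six categories. Crucially, the model records only \emph{how many} pairs are of each type and not which physical row holds which configuration; I would state this modeling convention explicitly at the outset, since it is the only point where the argument could be misread.

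Next I would encode a state as a $6$-tuple $(a_1,a_2,a_3,a_4,a_5,a_6)$ of non-negative integers, where $a_j$ counts the pairs of type $j$. Because there are exactly $m$ pairs in total, these counts satisfy the single linear constraint $a_1+a_2+a_3+a_4+a_5+a_6=m$. Conversely, every such tuple is realizable by some filling of the $m$ rows, so the oriented states of the lot are in bijection with the non-negative integer solutions of this equation. The problem is thereby converted entirely into counting those solutions.

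The final step is to apply stars and bars: the number of non-negative integer solutions of $a_1+\cdots+a_6=m$ equals the number of ways to distribute $m$ indistinguishable tokens among $6$ labelled categories, namely $\binom{m+6-1}{6-1}=\binom{m+5}{5}$, which is the claimed count.

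I do not expect a genuine obstacle in this argument; it is a direct application of a standard combinatorial formula once the bijection is set up. The one place demanding care is the modeling assumption that a state is an unordered multiset of pair types rather than an ordered sequence indexed by row position. I would flag this explicitly, noting that if the row order were retained the count would instead be $6^{m}$, so that the reader sees precisely why $\binom{m+5}{5}$ rather than $6^m$ is the correct answer for the model as defined.
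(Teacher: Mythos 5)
Your proof is correct and follows essentially the same route as the paper: both reduce a state to a $6$-tuple of non-negative integers summing to $m$ and count solutions by stars and bars, arriving at $\binom{m+5}{5}$. The only cosmetic difference is that the paper rederives the formula via the shift $b_i=a_i+1$ to positive solutions, whereas you quote the standard non-negative-solution formula directly; your explicit remarks on exhaustiveness, realizability, and the unordered-counts convention are sound additions, not a different argument.
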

\begin{proof}
Let $a_1,a_2,a_3,a_4,a_5,a_6$ represents the count of each pair type. The total number of pairs is $m$ so 
  $$\begin{cases}
	a_1+a_2+a_3+a_4+a_5+a_6=m\\
	a_i\ge 0,\quad i=1,2,\cdots ,6
\end{cases}$$
So the number of states is the number of integer solutions. We use the standard combinatorial method. (\cite{FellW}):\\
In our example, with $n=10$, we have $m=5$. 
we represent the pairs by stars. 
\begin{figure}[htb]
  \centering
  \includegraphics[scale=0.3]{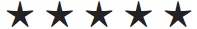}
\end{figure}

We add 1 to each $a_i$ and let $b_i=a_1+1$. Then \\
  $$\begin{cases}
	b_1+b_2+b_3+b_4+b_5+b_6=m+6\\
	b_i\ge 1,\quad i=1,2,\cdots ,6
\end{cases}$$
Adding 1 to each of the six $a_i$ results in $m+6=5+6=11$ stars.  \\
\begin{figure}[htb]
  \centering
  \includegraphics[scale=0.3]{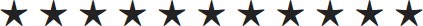}
\end{figure}

To find the number of solutions for the $b_i$ we have to insert $k-1$ separating bars in the spaces between the stars to get $k$ groups of at least one star each. In our model $k=6$, since there are 6 possible orientations.  
\begin{figure}[htb]
  \centering
  \includegraphics[scale=0.3]{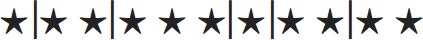}
\end{figure}

Thus the number of solutions for the equation $b_1+b_2+b_3+b_4+b_5+b_6=m+6$ with $b_i\geq 1$ is the number of choices for insertion of the $k-1=6-1=5$ separating bars.  So the number of solutions is $\binom{m+5}{5}$.
This also gives the number of solutions of $a_1+a_2+a_3+a_4+a_5+a_6=m$ with $a_i\geq 0$, and the result follows. 
\end{proof}

We can see how quickly the number of states increases.

\begin{table}[htb]
\centering
\caption{Counting the number of possible states}
\label{tab:2}       
\begin{tabular}[l]{|l|c|c|}
\hline
$n$ & $m$ & $\#\, of\, states$ \\
\hline
$2$ & $1$ & $6$  \\
\hline
4 & 2 & 21\\
\hline
6  &3 & 56\\
\hline
8 &4 &126\\
\hline
10&5 &252\\
\hline
\end{tabular}
\end{table}

In our case, with $n=10$ parking spaces in $m=5$ rows, there are $\tbinom{10}{5}=252$ possible oriented parking configurations.  This is large so it makes sense to move to simulation. The transitions between states are somewhat complex and we illustrate by looking at a particular state $(1,0,0,0,0,4)$, which represents 
1 Single Forward, 0 Single Backward, 0 (One Forward and One Backward), 0 (Double Forward), 0 (Double Backward), and 4 Empty pairs.  A departure of a vehicle occurs with rate $1\mu$ since there is only 1 vehicle in the parking lot and the new state will be $(0,0,0,0,0,5)$. Arrivals to the state $(1,0,0,0,0,4)$ will transition to one of the states 
$(2,0,0,0,0,3)$, $(0,0,0,1,0,4)$, $(0,0,1,0,0,4)$ and $(1,1,0,0,0,3)$. If we consider the transition from $(1,0,0,0,0,4)$ to 
$(2,0,0,0,0,3)$, we note that we start with a single forward vehicle and all other spots empty. The arrival rate is $\lambda$. There are nine empty spots and the probability of choosing a specific empty spot from an empty pair is double that of choosing the empty spot opposite the currently parked vehicle. Thus the probability of each of the 8 empty spots within empty  pairs is 2/17 and the probability of the 1 empty spot opposite the parked vehicle is 1/17. So the rate of moving into a specific empty spot in an empty pair is $(2/17)\lambda$. By our assumptions, given that we choose such as spot, the probability of forward parking is .1 so the transition rate into a specific one of the eight spots is $(2/17).1\lambda$. Since there are eight such spots, the total transition rate from $(1,0,0,0,0,4)$ to $(2,0,0,0,0,3)$ is $(16/17).1\lambda $. 

Similarly, we compute the rates from $(1,0,0,0,0,4)$ to  $(0,0,0,1,0,4)$, $(0,0,1,0,0,4)$ and $(1,1,0,0,0,3)$
as $\frac{1}{17}*0.8\lambda$, $\frac{1}{17}*0.2\lambda$, $\frac{16}{17}*0.9\lambda$, respectively.

\begin{table}[htb]
\centering
\caption{Transition rates from $(1,0,0,0,0,4)$}
\label{tab:3}       
\begin{tabular}[l]{|l|c|c|}
\hline
From & To & Rate\\
\hline
$(1,0,0,0,0,4)$ & $(0,0,0,0,0,5)$ & $\mu$  \\
\hline
$(1,0,0,0,0,4)$ & $(2,0,0,0,0,3)$ & $\frac{16}{17}0.1\lambda$\\
\hline
$(1,0,0,0,0,4)$  &$(0,0,0,1,0,4)$3 & $\frac{1}{17}*0.8\lambda$\\
\hline
$(1,0,0,0,0,4)$ &$(0,0,1,0,0,4)$4 &$\frac{1}{17}*0.2\lambda$\\
\hline
$(1,0,0,0,0,4)$ &$(1,1,0,0,0,3)$5 &$\frac{16}{17}*0.9\lambda$\\
\hline
\end{tabular}
\end{table}

These rates would be the positive values corresponding to the row of the $252\times252$ matrix transitioning from state $(1,0,0,0,0,4)$. Even though it is possible to write a infinitesimal generator matrix for these 252 cases, we choose to use simulation because of its large size and complex calculation.

\section{Pseudo Code}
In order to create an algorithm with fast operating speed, we take advanatge of exponential distribution properties.  We begin with our $5\times 2$ matrix $A$ initially with all zeros. 
\begin{itemize}
  \item We use the \textit{$memoryless \ property$} of exponential distribution and \textbf{Theorem} \ref{T:min} to generate the time until the next event which is exponentially distributed with parameter ${\lambda+\mu*sum(|A|)}$. The more general and more complex method would begin by generating two sequences of exponentially distributed values for interarrival times and service times. 
    \item According to \textbf{Theorem} \ref{T:small}, we decide the type (arrival or service completion) of the next event by generating a random number $r\in (0,1)$ and comparing it to $\frac{\lambda}{\lambda+\mu*sum(|A|)}$, because      the arrival interval and service times follow the exponential distribution with parameters $\lambda$ and $\mu*sum(|A|)$, respectively.
\end{itemize}

Therefore, our pseudo code is shown below:
\begin{algorithm}
  \caption{ Algorithm for getting relation among parking direction and arrival time, service time.}
  \begin{algorithmic}[1]
    \Require
      Arrival time parameter,$\lambda$;
    \Ensure
      Average of the proportion of reversing parking vehicles in the number of parking vehicles, $mean$;
    \Function{Myfunction}{$i$}
    \State $\mu \gets 1$, $Time \gets 0$
    \State $A \gets \begin{bmatrix}
                           0&0 \\
                           0&0 \\
                           0&0 \\
                           0&0 \\
                           0&0
                     \end{bmatrix}$
    \State $mean \gets 0$
    \Repeat
      \State Generate the time point at which the next event occurs, $t$, which has an exponential distribution with parameter $\dfrac{\lambda}{\lambda+\mu*sum(abs(A))}$.
      \State Generate a random number $r$ to determine whether the next event is arrival or departure. If $r<\dfrac{\lambda}{\lambda+\mu*sum(abs(A))}$, the next event is arrival; if $r> \dfrac{\lambda}{\lambda+\mu*sum(abs(A))}$, the next event is departure.
      \If{the next event is arrival}
      \State Create matrix $B$ to indicate the probability of selecting each parking space. 
      \State Generate random number $r$ to select the parking space.
      \State Create matrix $P$ to indicate the probability of parking forward for each space.
      \State Generate random number $r$ to determine whether parking is forward or backward.
      \Else{ Generate random number $r$ to select the parking space from which the vehicle leaves.}
      \EndIf
      \State Update A. Store the proportion of reverse parking in vector $v$.
      \State Time=Time+t.
    \Until{$Time > 24$}
    \State Select the last entry of the the vector $v$ as $prop$.
    \State
    \Return $prop$.
    \EndFunction
  \end{algorithmic}
\end{algorithm}
\clearpage

\section{Simulation and Results}

We assume that $\lambda$ vehicles arrive per hour, the average parking time per vehicle is $\mu=1$ hour.  The important value here is the ratio of $\lambda$ to $\mu$ so we can choose $\mu=1$ without loss of generality by simply using a different choice of time units. 

Let $F$ represent the number of forward facing vehicles and $B$ represent the number of backward facing vehicles at the end of a single simulation run. After using Monte Carlo simulation for the parking lot with the above code, we get the proportion of backward parking as a function of $\lambda$ shown below, where $E(\frac{B}{F+B})$ represents for the average of the proportion of reverse parking in all parking vehicles. This assumes that the system is in steady state, so that the time measured from the start time is large. 

\begin{figure}[htb]
  \centering
  \includegraphics[scale=0.43]{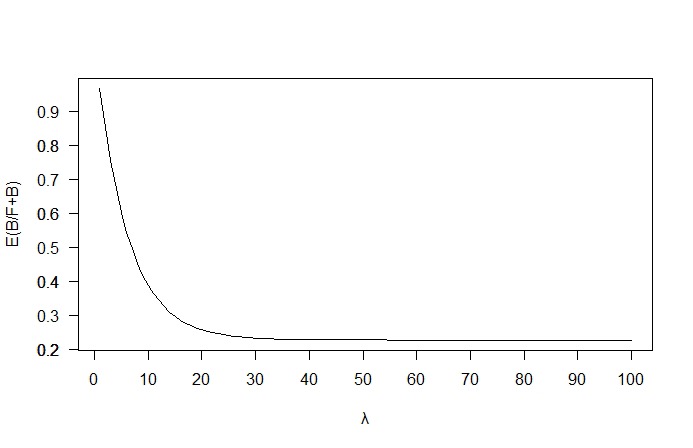}
  \caption{Proportion of Reverse Parking vs Lambda}
\end{figure}

From this image we can see that our oriented model is reasonable. When $\lambda$ is close to $0$, vehicles arrive at a slow rate which makes the parking lot often almost empty, which means the drivers can drive through directly so that the vehicle is automatically in a ``reverse'' status or drivers are more willing to spend some time and effort to reverse park, as there is likely little interference from other vehicles. Thus $E(\frac{B}{F+B})$ approaches $0.9$ (which is our choice of the reverse probability parameter when the both components of a pair are empty). When the $\lambda$ is large, vehicles arrive very frequently to make the parking lot often full, which means the drivers prefer to park forward because the parking lot is busy and crowded. i.e. $E(\frac{B}{F+B})$ approaches 0.2 (which is our choice of the reverse parking probability parameter when one side of a pair is already full).

To illustrate further, we counted the five most frequent states when $\lambda=2$, $\lambda=5$, $\lambda=8$, $\lambda=10$. Recall that a state has six components representing the counts for the six cases 
Single Forward, Single Backward, One Forward and One Backward, Double Forward, Double Backward, or Empty.  

\begin{table}[H]
\centering
\caption{The Five Highest Frequency states}
\label{tab:2}       
\begin{tabular}[l]{@{}lccccc}
\hline
$\lambda$ & $1st$ & $2nd$ & $3rd$ & $4th$ & $5th$ \\
\hline
$\lambda=2$ & $(0,2,0,0,0,3)$ & $(0,4,0,0,0,1)$ & $(0,4,1,0,0,0)$ & $(0,3,0,0,0,2)$ & $(0,1,1,0,0,3)$\\
\hline
$\lambda=5$ & $(0,2,0,0,0,3)$ & $(0,4,1,0,0,0)$ & $(0,2,0,1,0,2)$ & $(1,2,0,0,0,2)$ & $(0,3,0,0,0,2)$ \\ 
\hline
$\lambda=8$ & $(0,2,1,0,0,2)$ & $(0,3,0,0,1,1)$ & $(1,0,0,0,0,4)$ & $(0,3,1,0,1,0)$ & $(0,0,1,0,0,4)$ \\
\hline
$\lambda=10$ & $(0,0,4,1,0,0)$ & $(0,0,3,2,0,0)$ & $(0,1,3,1,0,0)$ & $(0,1,4,0,0,0)$ & $(0,1,2,2,0,0)$\\
\hline
\end{tabular}
\end{table}
From the table we can see that as $\lambda$ increases, that is, the number of vehicles entering the parking lot per hour increases, more Double Forward counts appear in the parking lot.  

We indicated that Figure 2 represents the proportion of reverse parking for the system in steady state.  
To numerically justify that claim, we consider the proportion as a function of time for several values of $\lambda$. 
\begin{figure}[H]
  \centering
  \includegraphics[scale=0.43]{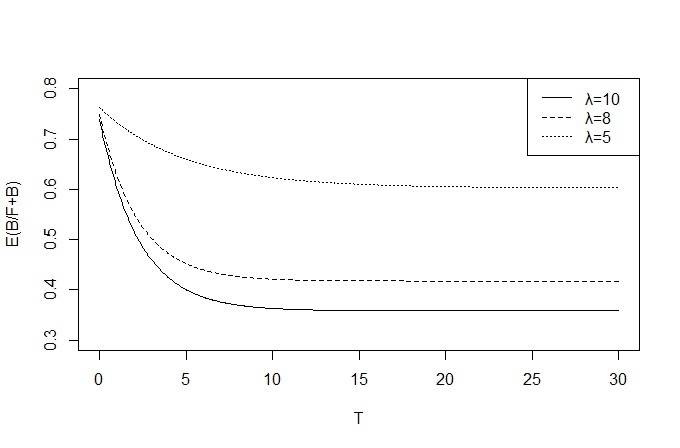}
  \caption{Result}
\end{figure}
Figure $3$ depicts $E(\frac{B}{F+B})$ as a function of time when $\lambda=5,8,10$. These three curves tend to level off to values $0.62,0.42,0.38$, beyond  $T=10$. These values are consistent with the values that appear in Figure 2 for the 3 values of $\lambda$ used. Further, we notice that when the lambda is larger, the time it takes for the parking lot reverse parking proportion to stabilize will be shorter.

\section{Conclusion and Future Research}
According to our model, we find that the proportion of reverse parking for parked vehicles is related to the arrival rate and  sojourn time of the vehicle. The larger the arrival rate, the lower the proportion of reverse parking will be. This model can help to explain the observed proportions of reverse parked vehicles in parking lots. If it is considered desirable to have a higher proportion of reverse parked vehicles, our model may suggest ways to achieve this. For example, more parking spots may mean a higher proportion of empty pairs in the lot which would encourage drivers to use reverse parking. Overall, our model can help the parking lot to develop a parking policy based on different vehicle entry and exit rates.

Further research on this topic is very possible. Our choice of probability values implies that the two possible access lanes are equally probable. In many cases, one access lane may be more convenient so probabilities could be adjusted to reflect that condition. Further, arrival rates will often change during the course of the day, so time dependent arrival rates could be added to the model to give more realistic results in many settings. 

{\bf Acknowledgements.} We acknowledge funding and support  from MITACS Global Internship program, CSC scholarship. .

\end{document}